\newtheorem{assumption}{Assumption}
\begin{document}

\title*{$H$-infinity Filtering for Cloud-Aided Semi-active Suspension with Delayed Information}
\author{Zhaojian Li and Ilya Kolmanovsky and Ella Atkins and Jianbo Lu and Dimitar Filev}
\institute{Zhaojian Li \at Department of Aerospace Engineering, University of Michigan, 1221 Beal Avenue, Ann Arbor, MI, USA 48109, \email{zhaojli@umich.edu}
\and Ilya Kolmanovsky \at Department of Aerospace Engineering, University of Michigan, 1221 Beal Avenue, Ann Arbor, MI, USA 48109, \email{ilya@umich.edu}
\and Ella Atkins \at Department of Aerospace Engineering, University of Michigan, 1221 Beal Avenue, Ann Arbor, MI, USA 48109, \email{ematkins@umich.edu}
\and Jianbo Lu \at Research \& Advanced Engineering, Ford Motor Company, 2101 Village Rd, Dearborn, MI, USA 48124, \email{jlu10@ford.com}
\and Dimitar Filev \at Research \& Advanced Engineering, Ford Motor Company, 2101 Village Rd, Dearborn, MI, USA 48124, \email{dfilev@ford.com}}
%
%

\maketitle

\abstract{This chapter presents an $H_{\infty}$ filtering framework for cloud-aided semi-active suspension system with time-varying delays. In this system, road profile information is downloaded from a cloud database to facilitate onboard estimation of suspension states. Time-varying data transmission delays are considered and assumed to be bounded. A quarter-car linear suspension model is used and an $H_{\infty}$ filter is designed with both onboard sensor measurements and delayed road profile information from the cloud. The filter design procedure is designed based on linear matrix inequalities (LMIs). Numerical simulation results are reported that illustrates the fusion of cloud-based and on-board information that can be achieved in Vehicle-to-Cloud-to-Vehicle (V2C2V) implementation.}

\section{Introduction}

The interest in employing cloud computing for automotive applications is growing to support computation and data intensive tasks \cite{jianbo,mobile,car_cloud}. The cloud can provide access to ``big data'' as well as real-time crowd-sourced information. Smart utilization of on-demand cloud resources can increase situation awareness and provide additional functionality. In addition, computation and data intensive tasks can be out-sourced to the cloud, enabling advanced and computation intensive algorithms to be implemented in real time. While embedded vehicle processors remain essential for time-critical applications, cloud computing can extend current control functionalities with additional functions to enhance performance \cite{jianbo}.

Numerous automotive functions have been identified as candidates for Vehicle-to-Cloud-to-Vehicle (V2C2V) implementations \cite{mobile}. In particular, a Cloud-aided safety-based route planning system has been proposed that exploits road risk index database and real-time factors like traffic and weather, and generates a ``safe'' route \cite{safety,safetyA}. A cloud-based road anomaly detection and crowd-sourcing framework has been proposed in \cite{pothole}. The cloud-aided vehicle semi-active suspension control system is another potential application \cite{suspension}, in which road profile data from the cloud is exploited. In this chapter, we consider the state estimation problem for the cloud-aided semi-active suspension system.



While cloud database can provide large quantity of information, its use for vehicle control is hindered by inevitable time delays in information transmission. While there are proposed mechanisms (e.g., exploiting multiple communication channels \cite{mult1,mult2}) to alleviate the effects of time delays, it is well acknowledged that delays can cause system instability and performance degradation and thus have to be considered in control and filter design. The architecture of V2C2V system with time delays is illustrated in  Figure.~\ref{info-based}. When needed, the vehicle can send a data request together with its GPS coordinates to the cloud. Then the cloud will send the requested data to the vehicle. The messages are exchanged with a wireless communication channel where vehicle-to-cloud delay ($\tau_{v2c}$) and cloud-to-vehicle delay ($\tau_{c2v}$) occur.

\begin{figure}[!t]
\centering
\includegraphics[width=2.5in]{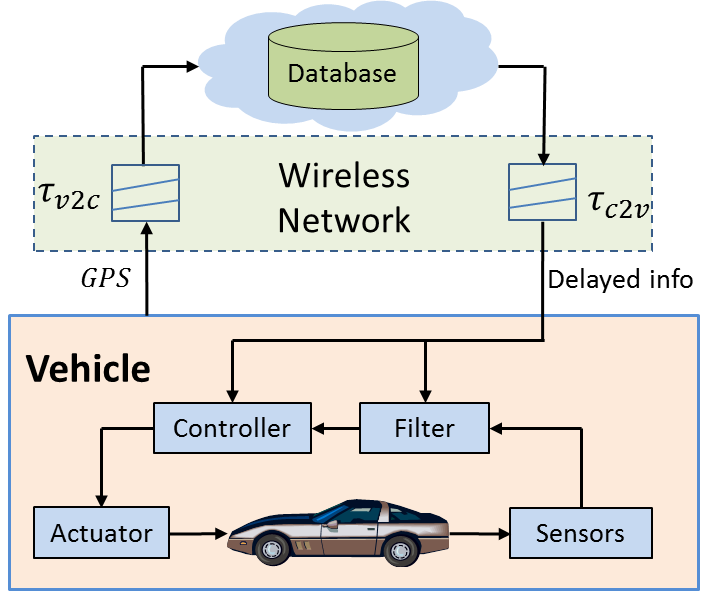}
\caption{Info-based V2C2V system with time delays.}\label{info-based}
\end{figure}

In this chapter, we consider the design of an observer for a V2C2V semi-active suspension system. Unlike traditional white-noise road disturbance treatment \cite{a,b,c}, in our system road profile information is downloaded from the cloud to facilitate the suspension state estimation. There are two main approaches in filter design for linear systems. One is $H_2$ filtering which minimizes estimation error variance. The other is $H_{\infty}$ filtering which is based on $H_{\infty}$ performance criterion. $H_{\infty}$ techniques are popular in stability analysis and filter design, even in cases with time delays \cite{H1,MCC}. In this chapter, we develop an $H_{\infty}$ filtering approach for V2C2V semi-active suspension with time-varying information delays. The filter design is reduced to linear matrix inequalities (LMIs), which can be solved by standard LMI solvers.

This book chapter is an extension of our previous conference paper \cite{Li}. The main contribution of this chapter is the illustration of the potential to fuse cloud-based and on-board measurements in V2C2V systems to facilitate state estimation and the developments of H-infinity filtering framework for handling communication delays.

 The rest of this chapter is organized as follows. In Section~\ref{sec:II} we present the preliminaries and problem formulation of the $H_{\infty}$ filtering for info-based V2C2V semi-active suspension system. In Section~\ref{sec:III}, stability and $H_\infty$ filter performance analysis results are presented and derived in terms of LMIs. Section~\ref{sec:IV} presents the design procedure for the H-infinity filter. Numerical simulations are developed in Section~\ref{sec:V}. Section~\ref{sec:VI} concludes the chapter.


\section{Problem formulation}\label{sec:II}
 In this chapter, we consider the filtering problem for cloud-aided semi-active suspension. Quarter-car models are often used for suspension control design \cite{a,b,c}, because they are simple yet capture many important characteristics of the full-car model. A quarter-car model, with 2 degrees of freedom (DOF), as shown in Fig.~\ref{fig:a}, is used. The $M_s$ and $M_{us}$ represent the car body (sprung) mass and the tire and axles (unsprung mass), respectively. The spring and shock absorber with adjustable damping ratio constitute the suspension system, connecting sprung (body) and unsprung (wheel assembly) masses. The tire is modeled as a spring with stiffness $k_{us}$ and its damping ratio is assumed to be negligible in the suspension formulation.
From Fig.~\ref{fig:a}, we have the following equations of motion:
\begin{equation}\label{equ:new1}
\begin{aligned}
\dot{x}_1&=x_2-\alpha w-\dot{r}_o,\\
M_{us} \dot{x}_2&=-k_{us} x_1+k_s x_3+c_s(x_4-x_2)+u,\\
\dot{x}_3&=x_4-x_2,\\
M_s \dot{x}_4&=-k_s x_3-c_s(x_4-x_2)-u,
\end{aligned}
\end{equation}
where $x_1$ is the tire deflection from equilibrium; $x_2$ is the unsprung mass velocity; $x_3$ is the suspension deflection from equilibrium; $x_4$ is the sprung mass velocity; $\dot{r}_o$ represents the deterministic \emph{velocity} disturbance due to the known road profile; $w$ represents the unknown road disturbance and $\alpha$ is a scaling factor; $c_s$ is the constant damping and $u$ is adjustable damper force; $k_s$ and $k_{us}$ are suspension and tire stiffness, respectively.

\begin{figure}[!t]
\centering
\includegraphics[width=2in]{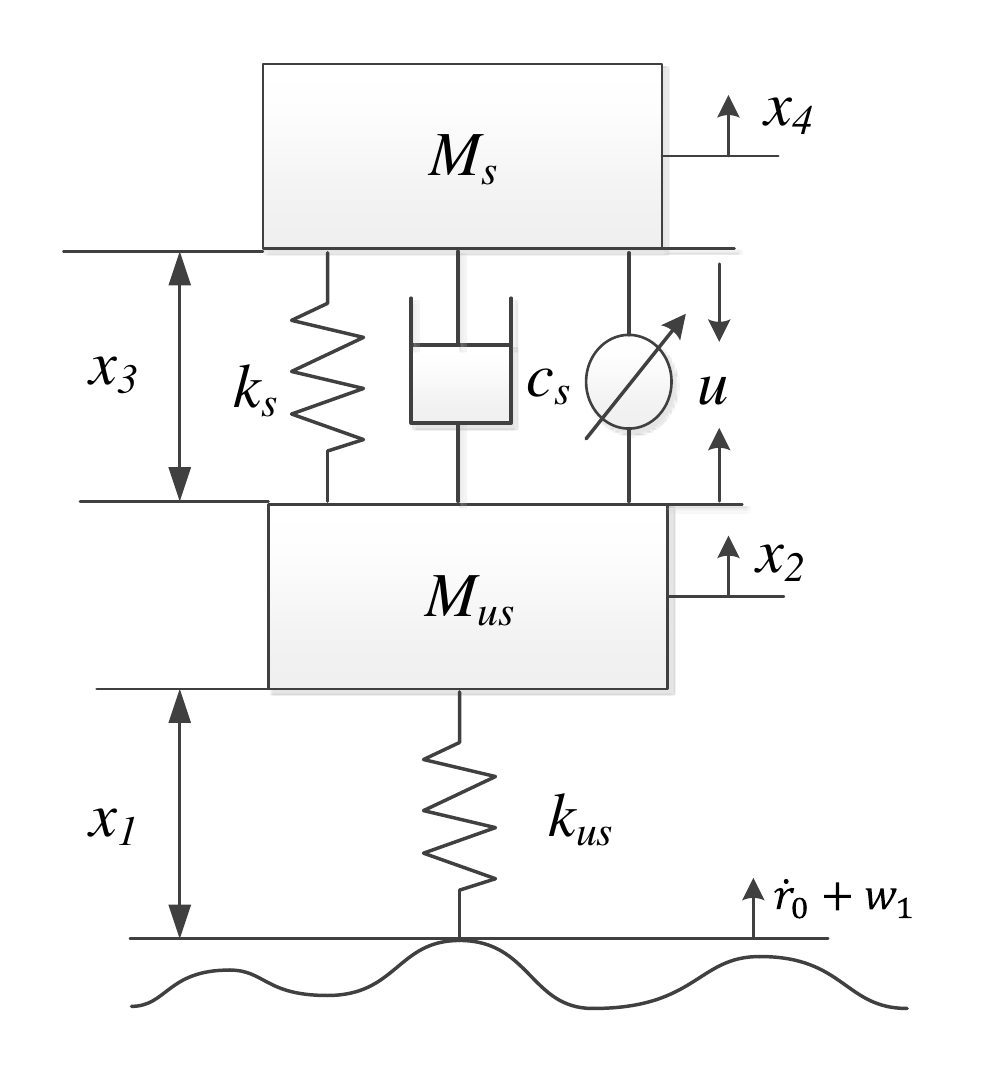}
\caption{Semi-active suspension dynamics.}
\label{fig:a}
\end{figure}

Defining $x=[x_1\quad x_2\quad x_3\quad x_4]^T$, the suspension system model can be written as
\begin{equation}\label{equ:new2}
\dot{x}=Ax+Bu+B_r\dot{r}_o+B_ww,
\end{equation}
where
\begin{equation}\label{AB}
A=
\begin{bmatrix}
0 &1 &0 &0\\[0.3em]
-\frac{k_{us}}{M_{us}} &-\frac{c_s}{M_{us}} &\frac{k_s}{M_{us}} &\frac{c_s}{M_{us}}\\[0.3em]
0 &-1 &0 &1\\[0.3em]
0 &\frac{c_s}{M_s} &-\frac{k_s}{M_s} &-\frac{c_s}{M_s}p
\end{bmatrix},
B=
\begin{bmatrix}
0\\
\frac{1}{M_{us}}\\[0.3em]
0\\
-\frac{1}{M_s}
\end{bmatrix},\\[0.3em]
\end{equation}
$$B_r=[-1\quad0\quad0\quad0]^{\text{T}},\quad B_w=[-\alpha\quad0\quad0\quad0]^\text{T}.$$
For vehicles equipped with semi-active suspension, measurements of vertical wheel velocity ($x_2$), suspension deflection ($x_3$) and body velocity ($x_4$) are typically available, while tire deflection is hard to measure. Let $y_0$ denote the measurements and $z$ denote the objective signal to be estimated, we augment (\ref{equ:new2}) as follows,
\begin{equation}\label{equ:new3}
\begin{aligned}
\dot{x}&=Ax+Bu+B_r\dot{r}_o+B_ww,\\
y_0&=[x_2\quad x_3\quad x_4]^{\text{T}}=C_0x+D_0w,\\
z&=x,
\end{aligned}
\end{equation}
where $C_0=[0_{3\times1}\quad I_3]$.

Figure~\ref{fig:diagrammain} illustrates the developed cloud-based vehicle software agent that has access to stored vehicle parameters ($M_{us}$,
$M_s$, $k_{us}$, $k_s$, $R$, $c_{s,i}$), receives vehicle state estimate, $\hat{x}$, vehicle longitudinal velocity, $v_{car}$, wheel speed, $\omega$, and GPS coordinates, and sends road profile information, $\dot{r}_o$ for use by on-board vehicle state estimator. The received road profile will be delayed in the wireless communication channel. Thus, we will have a delayed measurement of the road profile onboard, that is
\begin{equation}\label{r}
y_1(t)=\dot{r}_0(t-\tau(t))+D_1w(t),
\end{equation}
where $\tau(t)$ is the time-varying delay.
\begin{figure}[!t]
\centering
\includegraphics[width=1.5in]{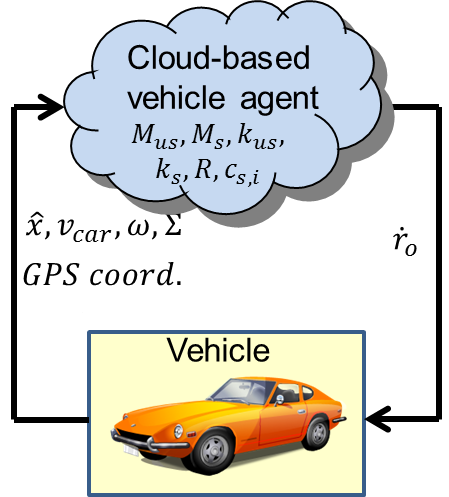}
\caption{V2C2V Suspension Control System.}\label{fig:diagrammain}
\end{figure}

\begin{assumption}
The time delay $\tau(t)$ is time varying and $\tau_m\le\tau(t)\le\tau_M$, where $\tau_m$ and $\tau_M$ are the lower and upper bound of the delay. In the sequel, we use $\tau$ to represent $\tau(t)$ when there is no confusion.
\end{assumption}

For the filter design, we assume that the road profile is modeled as $\ddot{r}_0=D_rw$, and we let $u(t)\equiv0$. Then combining (\ref{equ:new3}) and defining the augmented state as $x_a=[x^{\text{T}}\quad \dot{r}_0^{\text{T}}]^{\text{T}}$ and augmented measurement output $y_a=\big[y_0^\text{T}\quad y_1^\text{T}\big]^\text{T}$, we have
\begin{equation}\label{aug}
\begin{aligned}
\dot{x}_a(t)&=A_ax_a(t)+B_aw(t),\\
y_a(t)&=C_{a0}x_a(t)+C_{a1}x_a(t-\tau(t))+D_aw(t),\\
z(t)&=E_ax_a(t),
\end{aligned}
\end{equation}
where $A_a=\begin{bmatrix} A &B_r\\
0 &0\end{bmatrix}$,\quad $B_a=\begin{bmatrix} B_w\\
D_r\end{bmatrix}$, \quad $C_{a0}=\begin{bmatrix} C_0 &0\\
0 &0\end{bmatrix}$,\quad
$C_{a1}=\begin{bmatrix} 0 &0\\
0 &I\end{bmatrix}$,\quad $D_a=\begin{bmatrix} D_0 \\
D_1\end{bmatrix}$,\quad $E_a=[I\quad 0]$.
\\[0.3em]

A linear time invariant filter for system (\ref{aug}) has the following form:
\begin{equation}\label{filter}
\begin{aligned}
\dot{\hat{x}}(t)&=K_A\hat{x}(t)+K_By_a(t),\\
\hat{z}(t)&=K_C\hat{x}(t),\\
\hat{x}(0)&=x_a(0),\, \hat{x}(s)=0, \forall s\in[-\tau,0].
\end{aligned}
\end{equation}
Let $\eta(t)=\begin{bmatrix} x_a(t)\\\hat{x}(t)\end{bmatrix}$ denote the extended state. Then we have the following dynamics
\begin{equation}\label{extended}
\begin{aligned}
\dot{\eta}(t)&=\bar{A}\eta(t)+\bar{A}_dx_a(t-\tau)+\bar{B}w(t),\\
e(t)&\triangleq z(t)-\hat{z}(t)=\bar{C}\eta(t),
\end{aligned}
\end{equation}
where
$$\bar{A}=\begin{bmatrix}A_a &0\\K_BC_{a0} &K_A\end{bmatrix}, \quad \bar{A}_d=\begin{bmatrix}0 \\K_BC_{a1} \end{bmatrix},\quad
\bar{B}=\begin{bmatrix}B_a\\K_BD_a\end{bmatrix},\quad\bar{C}=[E_a\quad -K_C].$$ The desired $H_{\infty}$ filter problem addressed in this chapter can be formulated as follows: given system (\ref{aug}) and a prescribed level of noise attenuation $\gamma>0$, determine a linear filter in the form (\ref{filter}) such that the filtering error system is asymptotically stable and
\begin{equation}\label{gamma}
\underset{w\in L_2[0,\infty)}{\sup}\,\frac{\|e(t)\|_2^2}{\|w\|_2^2}<\gamma^2.
\end{equation}
Before ending this section, we present the following lemmas which will be used in the proofs of subsequent sections.
\\
\begin{lemma} \cite{LMI}: The linear matrix inequalities
$$S=\begin{bmatrix}S_{11}&S_{12}\\
S_{12}^\text{T}&S_{22}\end{bmatrix}<0$$
where $S_{11}=S_{11}^{\text{T}}$ and $S_{22}=S_{22}^{\text{T}}$ are equivalent to
$$S_{11}<0, \quad S_{22}-S_{12}^\text{T}S_{11}^{-1}S_{12}<0$$
or
$$S_{22}<0, \quad S_{11}-S_{12}S_{22}^{-1}S_{12}^\text{T}<0.$$
\end{lemma}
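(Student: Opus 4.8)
The plan is to prove both equivalences by a \emph{congruence transformation} that block-diagonalizes $S$, together with the elementary fact that if $T$ is invertible then $S<0$ if and only if $TST^{\text{T}}<0$. Indeed, $v^{\text{T}}(TST^{\text{T}})v=(T^{\text{T}}v)^{\text{T}}S(T^{\text{T}}v)$, and since $T^{\text{T}}$ is a bijection sending nonzero vectors to nonzero vectors, the sign of the two quadratic forms coincides. I will also use throughout that a symmetric block-diagonal matrix is negative definite precisely when each of its diagonal blocks is.

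For the first equivalence I would first observe that $S<0$ forces $S_{11}<0$: restricting the quadratic form to vectors of the form $[v^{\text{T}}\;\;0]^{\text{T}}$ gives $v^{\text{T}}S_{11}v<0$ for all $v\neq0$, so in particular $S_{11}$ is nonsingular and $S_{11}^{-1}$ exists. Hence the lower-triangular matrix
$$T=\begin{bmatrix} I & 0\\ -S_{12}^{\text{T}}S_{11}^{-1} & I\end{bmatrix}$$
is well defined and invertible (unit determinant). A direct multiplication, using $S_{11}^{\text{T}}=S_{11}$ to cancel the off-diagonal blocks, then yields
$$TST^{\text{T}}=\begin{bmatrix} S_{11} & 0\\ 0 & S_{22}-S_{12}^{\text{T}}S_{11}^{-1}S_{12}\end{bmatrix}.$$
By congruence invariance, $S<0$ is equivalent to this block-diagonal matrix being negative definite, which holds iff $S_{11}<0$ and $S_{22}-S_{12}^{\text{T}}S_{11}^{-1}S_{12}<0$, exactly the first stated condition. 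The reverse direction uses the same identity: assuming $S_{11}<0$ (hence invertible) lets me form $T$, and the Schur-complement hypothesis makes $TST^{\text{T}}<0$, whence $S<0$.

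The second equivalence is obtained symmetrically by pivoting on the $(2,2)$ block. Here $S<0$ forces $S_{22}<0$ (restrict to $[0\;\;v^{\text{T}}]^{\text{T}}$), so $S_{22}^{-1}$ exists, and the upper-triangular congruence
$$T'=\begin{bmatrix} I & -S_{12}S_{22}^{-1}\\ 0 & I\end{bmatrix}$$
gives $T'S{T'}^{\text{T}}=\operatorname{diag}\!\big(S_{11}-S_{12}S_{22}^{-1}S_{12}^{\text{T}},\,S_{22}\big)$, from which the claim follows by the same block-diagonal argument.

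The computations are routine, so the only point requiring care is the logical ordering around invertibility: each Schur complement is only defined once the pivot block $S_{11}$ (respectively $S_{22}$) is known to be nonsingular. Thus in each direction I must first secure invertibility of the pivot block --- as a consequence of $S<0$ in the forward direction, or as part of the stated hypothesis in the reverse direction --- before forming the congruence matrix and invoking congruence invariance. I expect this bookkeeping, rather than any nontrivial estimate, to be the main thing to get right.
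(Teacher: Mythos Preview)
Your argument is correct and is exactly the standard congruence (block-LDU) proof of the Schur complement lemma. The paper does not give its own proof of this statement: it simply quotes the result from \cite{LMI} and remarks that it is the well-known Schur complement lemma, so there is nothing further to compare.
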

Lemma 1 is the well-know Schur complement lemma.
\begin{lemma} \cite{boukas}: Let $X$, $Y$ be real constant matrices of compatible dimensions. Then
$$X^{\text{T}}Y+Y^{\text{T}}X\le\epsilon X^{\text{T}}X+\frac{1}{\epsilon}Y^{\text{T}}Y$$
holds for any $\epsilon>0.$
\end{lemma}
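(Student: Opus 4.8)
The plan is to prove the inequality by completing the square at the level of matrices, exploiting the elementary fact that $M^{\text{T}}M$ is symmetric positive semidefinite for any real matrix $M$ of compatible dimensions. This is precisely the matrix analogue of the scalar Young inequality $2ab\le\epsilon a^2+\epsilon^{-1}b^2$, and the same algebraic identity drives the argument.

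Concretely, I would fix an arbitrary $\epsilon>0$ and introduce the matrix $M=\sqrt{\epsilon}\,X-\tfrac{1}{\sqrt{\epsilon}}\,Y$, which is well defined and of compatible dimensions since $X$ and $Y$ are. The key observation is that $M^{\text{T}}M\ge 0$, because $v^{\text{T}}M^{\text{T}}Mv=\|Mv\|_2^2\ge 0$ for every vector $v$. The single substantive step is then to expand this product:
\begin{equation*}
M^{\text{T}}M=\epsilon\,X^{\text{T}}X-X^{\text{T}}Y-Y^{\text{T}}X+\frac{1}{\epsilon}\,Y^{\text{T}}Y\ge 0,
\end{equation*}
where the two cross terms arise from $(\sqrt{\epsilon}\,X)^{\text{T}}(\tfrac{1}{\sqrt{\epsilon}}\,Y)=X^{\text{T}}Y$ and its transpose $Y^{\text{T}}X$, the scaling factors cancelling exactly. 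Rearranging the relation $M^{\text{T}}M\ge 0$ immediately yields the claimed bound $X^{\text{T}}Y+Y^{\text{T}}X\le\epsilon X^{\text{T}}X+\tfrac{1}{\epsilon}Y^{\text{T}}Y$.

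Since the result follows from a single completion-of-square identity, there is no genuine obstacle to overcome; the only points requiring a little care are the transpose bookkeeping, so that the two cross terms combine to exactly $X^{\text{T}}Y+Y^{\text{T}}X$, and the observation that $\epsilon>0$ is entirely free. This last feature is what makes the lemma useful later: it lets one dominate the indefinite cross term $X^{\text{T}}Y+Y^{\text{T}}X$ by a tunable combination of the two ``diagonal'' terms $X^{\text{T}}X$ and $Y^{\text{T}}Y$ when decoupling coupled quadratic terms in the subsequent LMI derivations.
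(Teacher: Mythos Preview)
Your proof is correct: the completion-of-squares identity $(\sqrt{\epsilon}\,X-\epsilon^{-1/2}Y)^{\text{T}}(\sqrt{\epsilon}\,X-\epsilon^{-1/2}Y)\ge 0$ expands exactly as you wrote and yields the inequality immediately. The paper itself does not prove this lemma but simply cites it as a known result, so there is no alternative approach to compare against; your argument is in fact the standard one for this inequality.
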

\begin{lemma}\cite{like}
If $f$, $g$: $[a\quad b]\rightarrow \mathbb{R}^n$ are similarly ordered, that is,
\begin{equation}
(f(x)-f(y))^\text{T}(g(x)-g(y))\ge0, \quad \forall x,y\in[a,\,b],
\end{equation}
then,
\begin{equation}\label{Cheb}
\begin{aligned}
&\frac{1}{b-a}\int_a^bf(x)g(x)dx\ge\Big[\frac{1}{b-a}\int_a^bf(x)dx\Big]\Big[\frac{1}{b-a}\int_a^bg(x)dx\Big].
\end{aligned}
\end{equation}
\end{lemma}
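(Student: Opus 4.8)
The plan is to prove this inequality (a vector form of Chebyshev's integral inequality) by the classical symmetrization trick: rather than manipulating a single integral directly, I would introduce a double integral over $[a,b]\times[a,b]$ of the similarly-ordered expression and exploit its sign.

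First I would set
$$I = \int_a^b\!\int_a^b \big(f(x)-f(y)\big)^{\text{T}}\big(g(x)-g(y)\big)\,dx\,dy.$$
By the similarly-ordered hypothesis the integrand is nonnegative for every $(x,y)\in[a,b]\times[a,b]$, so $I\ge 0$. This single observation carries the entire content of the hypothesis; everything afterward is algebraic bookkeeping.

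Next I would expand the integrand into its four bilinear pieces, $f(x)^{\text{T}}g(x)$, $f(y)^{\text{T}}g(y)$, $f(x)^{\text{T}}g(y)$ and $f(y)^{\text{T}}g(x)$, and integrate each separately by Fubini. The two \emph{diagonal} terms each yield $(b-a)\int_a^b f(x)^{\text{T}}g(x)\,dx$ (after relabelling the dummy variable in the $y$-integral), while the two \emph{cross} terms each factor as $\big[\int_a^b f\big]^{\text{T}}\big[\int_a^b g\big]$; here I would note that, being scalars, the two cross terms equal each other, so they combine cleanly. This produces the identity
$$I = 2(b-a)\int_a^b f(x)^{\text{T}}g(x)\,dx - 2\Big[\int_a^b f(x)\,dx\Big]^{\text{T}}\Big[\int_a^b g(x)\,dx\Big].$$

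Finally, combining $I\ge 0$ with this identity and dividing through by $2(b-a)^2$ gives exactly the claimed inequality (\ref{Cheb}). The argument is essentially routine and I do not expect a genuine obstacle; the only point requiring care is the vector bookkeeping, namely reading the products in the statement as Euclidean inner products $f^{\text{T}}g$ so that both sides are scalars, and verifying that the transpose structure makes the two cross terms coincide so that the factor of $2$ appears correctly.
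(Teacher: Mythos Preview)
Your argument is the standard symmetrization proof of Chebyshev's integral inequality and is correct as written; the only cosmetic point is that the paper's statement writes $f(x)g(x)$ rather than $f(x)^{\text{T}}g(x)$, but you have interpreted this correctly as the Euclidean inner product, consistent with the hypothesis. Note that the paper itself does not prove this lemma: it is quoted from the reference \cite{like} and merely labeled as ``a Chebyshev's inequality under similarly-ordered conditions,'' so there is no in-paper proof to compare against.
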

Lemma 3 is a Chebyshev's inequality under similarly-ordered conditions.
\section{$H_\infty$ performance analysis}\label{sec:III}
In this section, sufficient conditions of existence of the $H_{\infty}$ filter are derived as LMIs.
\\
\begin{theorem} Let $K_A$, $K_B$, $K_C$ be given matrices and $\gamma$ be a given positive scalar. If there exist symmetric matrices $P>0$, $Q_1>0$, $Q_2>0$, satisfying
\begin{equation}\label{lemma3}
\begin{bmatrix}
\Psi &P\bar{B} &\bar{C}^{\text{T}} &\Gamma\\
* &-\gamma^2I+\tau_M Q_2 &0 &0\\
* &* &-I &0\\
* &* &* &-\frac{1}{\tau_M}diag\{Q_1,Q_2\}
\end{bmatrix}<0
\end{equation}
where $$\Psi=P[\bar{A}+\bar{A}_dI_0]+[\bar{A}+\bar{A}_dI_0]^{\text{T}}P+\tau_M I_0^{\text{T}}Q_1I_0,\quad
\Gamma=[P\bar{A}_dA_a \quad P\bar{A}_dB_a],\quad I_0=[I\quad 0],$$
then the error system (\ref{extended}) is asymptotically stable and satisfies (\ref{gamma}).
\end{theorem}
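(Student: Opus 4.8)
The plan is to establish both the $H_\infty$ performance bound and asymptotic stability through a single Lyapunov--Krasovskii functional argument, which is the standard route for delay-dependent stability of systems with time-varying delays. First I would propose a functional of the form $V(t)=\eta^{\text{T}}(t)P\eta(t)+\int_{t-\tau_M}^{t}\int_{s}^{t}\dot{x}_a^{\text{T}}(\theta)Q\,\dot{x}_a(\theta)\,d\theta\,ds$ or a closely related double-integral term involving $Q_1$ and $Q_2$, chosen so that the $\tau_M Q_1$ and $\tau_M Q_2$ contributions in $\Psi$ and the $(2,2)$ block emerge naturally upon differentiation. The key observation enabling the structure $\bar{A}+\bar{A}_dI_0$ in $\Psi$ is the identity $x_a(t-\tau)=I_0\eta(t)-\int_{t-\tau}^{t}\dot{x}_a(\theta)\,d\theta$, which rewrites the delayed state in terms of the current extended state plus an integral remainder; substituting this into the dynamics (\ref{extended}) is what moves $\bar{A}_dI_0$ into the leading-order term.

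Next I would compute $\dot{V}(t)$ along trajectories of (\ref{extended}). Differentiating the quadratic term gives the $P\bar{A}$, $P\bar{A}_d$, and $P\bar{B}$ contributions, while the double-integral term produces a positive $\tau_M\,\dot{x}_a^{\text{T}}Q\,\dot{x}_a$ piece together with the negative integral $-\int_{t-\tau_M}^{t}\dot{x}_a^{\text{T}}Q\,\dot{x}_a\,d\theta$. The central technical step is to bound this negative integral from above so that it can be absorbed into an LMI. Here I expect to invoke Lemma~3, the Chebyshev-type integral inequality, to relate $\int_{t-\tau}^{t}\dot{x}_a^{\text{T}}Q\,\dot{x}_a\,d\theta$ to the product of the integrals $\big(\int_{t-\tau}^{t}\dot{x}_a\,d\theta\big)^{\text{T}}Q\big(\int_{t-\tau}^{t}\dot{x}_a\,d\theta\big)$, thereby expressing the cross terms in the integral remainder $\int_{t-\tau}^{t}\dot{x}_a(\theta)\,d\theta$ in a quadratic form amenable to the Schur complement. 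The factor $\frac{1}{\tau_M}$ in the $(4,4)$ block and the $\Gamma$ column, which contains $P\bar{A}_dA_a$ and $P\bar{A}_dB_a$, should arise precisely from reorganizing these integral cross terms after the bound is applied.

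To obtain the $H_\infty$ bound (\ref{gamma}), I would introduce the standard performance index and require $\dot{V}(t)+e^{\text{T}}(t)e(t)-\gamma^2w^{\text{T}}(t)w(t)<0$ under zero initial conditions; integrating from $0$ to $\infty$ then yields $\|e\|_2^2<\gamma^2\|w\|_2^2$. The term $e^{\text{T}}e=\eta^{\text{T}}\bar{C}^{\text{T}}\bar{C}\eta$ accounts for the $\bar{C}^{\text{T}}$ column, and the $-\gamma^2w^{\text{T}}w$ term together with the $\tau_M Q_2$ contribution from bounding the $w$-dependent part of the integral remainder explains the $(2,2)$ block. Collecting all terms into a single quadratic form in the augmented vector $\operatorname{col}\{\eta(t),\,w(t),\,e(t),\,\int\dot{x}_a\}$ and applying the Schur complement lemma (Lemma~1) repeatedly to pull the $\bar{C}^{\text{T}}\bar{C}$, $\gamma^2$, and $\frac{1}{\tau_M}\mathrm{diag}\{Q_1,Q_2\}$ terms out of the diagonal blocks should deliver exactly the LMI (\ref{lemma3}).

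The main obstacle, I expect, will be the careful bookkeeping in applying Lemma~3 to the integral remainder: because $\tau(t)$ is time-varying with $\tau(t)\le\tau_M$, one must replace the data-dependent $\tau$ appearing in the integral bounds by the constant $\tau_M$ while preserving the sign of the resulting inequality, and simultaneously split the remainder integral $\int_{t-\tau}^{t}\dot{x}_a\,d\theta$ into its $A_a x_a$, $B_a w$ drift contributions so that $\Gamma$ assembles correctly. Verifying that the similarly-ordered hypothesis of Lemma~3 is legitimately met (or that the weighted-integral version being used applies to $\dot{x}_a^{\text{T}}Q\,\dot{x}_a$) and that no cross terms are dropped is the delicate part; the remaining Schur-complement manipulations are routine once the integral bound is correctly installed.
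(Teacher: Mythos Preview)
Your high-level plan---rewrite $x_a(t-\tau)$ via the Newton--Leibniz identity, choose a Lyapunov--Krasovskii functional, bound the integral remainder, then Schur-complement---matches the paper. But two concrete choices in your sketch would not reproduce LMI~(\ref{lemma3}). First, the double-integral part of the functional is \emph{not} in $\dot{x}_a^{\text{T}}Q\dot{x}_a$; the paper takes
\[
V_1=\int_{-\tau_M}^{0}\!\!\int_{t+\theta}^{t}\! x_a^{\text{T}}(s)Q_1x_a(s)\,ds\,d\theta,\qquad
V_2=\int_{-\tau_M}^{0}\!\!\int_{t+\theta}^{t}\! w^{\text{T}}(s)Q_2w(s)\,ds\,d\theta,
\]
whose derivatives give exactly $\tau_M\eta^{\text{T}}I_0^{\text{T}}Q_1I_0\eta$ in $\Psi$ and $\tau_Mw^{\text{T}}Q_2w$ in the $(2,2)$ block. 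Your proposed $\tau_M\dot{x}_a^{\text{T}}Q\dot{x}_a$ would instead produce $\tau_M(A_ax_a+B_aw)^{\text{T}}Q(A_ax_a+B_aw)$, which has the wrong shape.

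Second, the $\Gamma$ column and the $-\frac{1}{\tau_M}\mathrm{diag}\{Q_1,Q_2\}$ block do not arise from a Jensen-style application of Lemma~3 to a negative $\int\dot{x}_a^{\text{T}}Q\dot{x}_a$ term. In the paper the cross terms $-2\eta^{\text{T}}P\bar{A}_dA_a\!\int_{t-\tau}^{t}x_a\,ds$ and $-2\eta^{\text{T}}P\bar{A}_dB_a\!\int_{t-\tau}^{t}w\,ds$ coming from $\dot V_0$ are bounded \emph{first} by Lemma~2 (Young's inequality with $\epsilon=\tau$), producing for instance $\tau\,\eta^{\text{T}}P\bar{A}_dA_aQ_1^{-1}A_a^{\text{T}}\bar{A}_d^{\text{T}}P\eta+\tfrac{1}{\tau}\big(\int x_a\big)^{\!\text{T}}Q_1\big(\int x_a\big)$. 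Lemma~3 is then used in the direction opposite to the one you describe---to upper-bound $\big(\int x_a\big)^{\text{T}}Q_1\big(\int x_a\big)$ by $\tau\int x_a^{\text{T}}Q_1x_a\,ds$, which is absorbed by $\dot V_1$. The surviving $\tau_M\,\eta^{\text{T}}P\bar{A}_dA_aQ_1^{-1}A_a^{\text{T}}\bar{A}_d^{\text{T}}P\eta$ term (and its $B_a,Q_2$ analogue) is precisely what Schur-complements into $\Gamma$ and $-\frac{1}{\tau_M}\mathrm{diag}\{Q_1,Q_2\}$. Your outline omits Lemma~2 altogether; without it, a pure Chebyshev bound on $\int\dot{x}_a$ would yield an LMI with a single $P\bar{A}_d$ off-diagonal block rather than the split $[P\bar{A}_dA_a\ \ P\bar{A}_dB_a]$ with separate $Q_1,Q_2$.
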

\begin{proof} Note:
\begin{equation}\label{equ:2}
\begin{aligned}
x_a(t-\tau)&=x_a(t)-\int_{t-\tau}^t\,\dot{x}_a(s)ds\\
&=x_a(t)-\int_{t-\tau}^t\,[A_ax_a(s)+B_aw(s)]ds.
\end{aligned}
\end{equation}
Substituting (\ref{equ:2}) into (\ref{extended}) leads to
\begin{equation}\label{sub}
\begin{aligned}
\dot{\eta}(t)&=[\bar{A}+\bar{A}_dI_0]\eta(t)-\bar{A}_d\int_{t-\tau}^t[A_ax_a(s)+B_aw(s)]ds+\bar{B}w(t).
\end{aligned}
\end{equation}
We next show the stability of the error system (\ref{extended}). Let us consider the following Lyapunov functional candidate
\begin{equation}\label{lya}
V(\eta(t))=V_0(\eta(t))+V_1(\eta(t))+V_2(\eta(t)),
\end{equation}
where
$$V_0(\eta(t))=\eta^{\text{T}}(t)P\eta(t),$$
$$
V_1(\eta(t))=\int_{-\tau_M}^0\int_{t+\theta}^tx_a^{\text{T}}(s)Q_1x_a(s)dsd\theta,$$
$$
V_2(\eta(t))=\int_{-\tau_M}^0\int_{t+\theta}^tw^{\text{T}}(s)Q_2w(s)dsd\theta.$$

In view of (\ref{sub}), we have
\begin{equation}\label{der}
\begin{aligned}
\dot{V}_0(\eta(t))=&\eta^{\text{T}}(t)\big[P[\bar{A}+\bar{A}_dI_0]+[\bar{A}+\bar{A}_dI_0]^{\text{T}}P\big]\eta(t)\\
&-2\eta^{\text{T}}(t)P\bar{A}_d\big[A_a\int_{t-\tau}^tx_a(s)ds+B_a\int_{t-\tau}^tw(s)ds\big]+2\eta^{\text{T}}(t)P\bar{B}w(t).
\end{aligned}
\end{equation}
Let $X^\text{T}=-\eta^\text{T}(t)P\bar{A}_dA_aQ_1^{-\frac{1}{2}}$, $\epsilon=\tau$, $ Y=Q_1^{\frac{1}{2}}\int_{t-\tau}^tx_a(s)ds$. Using Lemma 2, we obtain
\begin{equation}\label{equ:3}
\begin{aligned}
-2\eta^{\text{T}}(t)P\bar{A}_dA_a\int_{t-\tau}^tx_a(s)ds&\leq\tau\eta^{\text{T}}(t)[P\bar{A}_dA_aQ_1^{-1}A_a^\text{T}\bar{A}_d^\text{T}P]\eta(t)\\
&+\frac{1}{\tau}\int_{t-\tau}^tx_a^{T}dsQ_1\int_{t-\tau}^tx_a(s)ds.
\end{aligned}
\end{equation}
Using Lemma 3, we have
\begin{equation}\label{equ:che1}
\int_{t-\tau}^tx_a^{T}(s)dsQ_1\int_{t-\tau}^tx_a(s)ds\leq \tau \int_{t-\tau}^tx_a^\text{T}(s)Q_1x_a(s)ds.
\end{equation}
From (\ref{equ:3}) and (\ref{equ:che1}), it follows that
\begin{equation}\label{equ:che}
\begin{aligned}
-2\eta^{\text{T}}(t)P\bar{A}_dA_a\int_{t-\tau}^tx_a(s)ds&\leq\tau\eta^{\text{T}}(t)[P\bar{A}_dA_aQ_1^{-1}A_a^\text{T}\bar{A}_d^\text{T}P]\eta(t)\\
&+\int_{t-\tau}^tx_a^{T}Q_1x_a(s)ds.
\end{aligned}
\end{equation}
Similarly,
\begin{equation}\label{equ:4}
\begin{aligned}
-2\eta^\text{T}(t)P\bar{A}_dB_a\int_{t-\tau}^tw(s)ds&\leq\tau\eta^\text{T}(t)P\bar{A}_dB_aQ_2^{-1}B_a^\text{T}\bar{A}_d^\text{T}P\eta(t)\\
&+\int_{t-\tau}^tw^\text{T}(s)Q_2w(s)ds.
\end{aligned}
\end{equation}
Combining (\ref{der})-(\ref{equ:4}) yields
\begin{equation}\label{equ:5}
\begin{aligned}
\dot{V}_0(\eta(t))&\leq\eta^{\text{T}}(t)\big[P[\bar{A}+\bar{A}_dI_0]+[\bar{A}+\bar{A}_dI_0]^{\text{T}}P\big]\eta(t)\\
&+\tau\eta^{\text{T}}(t)[P\bar{A}_dA_aQ_1^{-1}A_a^\text{T}\bar{A}_d^\text{T}P]\eta(t)\\
&+\int_{t-\tau}^tx_a^{T}(s)Q_1x_a(s)ds\\
&+\tau\eta^\text{T}(t)P\bar{A}_dB_aQ_2^{-1}B_a^\text{T}\bar{A}_d^\text{T}P\eta(t)\\
&+\int_{t-\tau}^tw^\text{T}(s)Q_2w(s)ds+2\eta^{\text{T}}(t)P\bar{B}w(t).
\end{aligned}
\end{equation}
Simple computations yield
\begin{equation}\label{equ:6}
\begin{aligned}
&\dot{V}_1(\eta(t))=-\int_{t-\tau_M}^tx_a^\text{T}(s)Q_1x_a(s)ds+\tau_M x_a^\text{T}(t)Q_1x_a(t)
\end{aligned}
\end{equation}
Also,
\begin{equation}\label{equ:66}
\dot{V}_2(\eta(t))=\tau_M w^\text{T}(t)Q_2w(t)-\int_{t-\tau_M}^tw^\text{T}(s)Q_2w(s)ds.
\end{equation}
Combining (\ref{equ:5}), (\ref{equ:6}) and (\ref{equ:66}), we obtain
\begin{equation}\label{equ:77}
\begin{aligned}
\dot{V}(\eta(t))&\le \eta^{\text{T}}(t)\big[P[\bar{A}+\bar{A}_dI_0]+[\bar{A}+\bar{A}_dI_0]^{\text{T}}P\big]\eta(t)\\
&\quad+\tau_M\eta^{\text{T}}(t)[P\bar{A}_dA_aQ_1^{-1}A_a^\text{T}\bar{A}_d^\text{T}P]\eta(t)\\
&\quad+\tau_M\eta^\text{T}(t)P\bar{A}_dB_aQ_2^{-1}B_a^\text{T}\bar{A}_d^\text{T}P\eta(t)\\
&\quad+\tau_M x_a^\text{T}(t)Q_1x_a(t)+\tau_M w^\text{T}(t)Q_2w(t)+2\eta^{\text{T}}(t)P\bar{B}w(t)\\
&=[\eta^\text{T}(t)\, w^\text{T}(t)]\Phi\begin{bmatrix}\eta(t)\\w(t)\end{bmatrix},
\end{aligned}
\end{equation}
where $$\Phi=\begin{bmatrix}\Psi &P\bar{B}\\
\bar{B}^\text{T}P &\tau_M Q_2\end{bmatrix}+\begin{bmatrix}\tau_M\Gamma diag\{Q_1^{-1},\,Q_2^{-1}\}\Gamma^\text{T} &0\\
0 &0
\end{bmatrix}.$$
Using Lemma 1, it is straightforward to check that (\ref{lemma3}) implies $\Phi<0$, which concludes the proof of stability.\\
Now let's define an $H_{\infty}$ performance $J_{T}$ as follows:
$$J_{T}=\int_0^T[e^\text{T}(t)e(t)-\gamma^2w^\text{T}(t)w(t)]dt.$$
Then we have
\begin{equation}
\begin{aligned}
J_T&=\int_0^T[e^\text{T}(t)e(t)-\gamma^2w^\text{T}(t)w(t)]dt\\
&=\int_0^T[e^\text{T}(t)e(t)-\gamma^2w^\text{T}(t)w(t)+\dot{V}(\eta(t))-\dot{V}(\eta(t))]dt\\
&=\int_0^T[e^\text{T}(t)e(t)-\gamma^2w^\text{T}(t)w(t)+\dot{V}(\eta(t))]dt-V(\eta(T))\\
&\le \int_0^T [\eta^\text{T}(t)\, w^\text{T}(t)]\Theta\begin{bmatrix}\eta(t)\\w(t)\end{bmatrix}dt,
\end{aligned}
\end{equation}
where
\begin{equation}
\begin{aligned}
\Theta&=\begin{bmatrix}\Psi+\bar{C}^\text{T}\bar{C} &P\bar{B}\\
\bar{B}^\text{T}P &-\gamma^2I+\tau_M Q_2\end{bmatrix}
+\begin{bmatrix}\tau_M\Gamma diag\{Q_1^{-1},\,Q_2^{-1}\}\Gamma^\text{T} &0\\
0 &0
\end{bmatrix}.
\end{aligned}
\end{equation}
Using Schur complement as in Lemma 1, it is straight to check that (\ref{lemma3}) implies $\Theta<0$ and consequently
$$J_T<0\quad \forall T>0,$$
and it follows that
$$J_{\infty}=\int_0^\infty[e^\text{T}(t)e(t)-\gamma^2w^\text{T}(t)w(t)]dt<0,$$
that is $\|e\|_2\le\gamma\|w\|_2$. This concludes the proof of Theorem 1.
\end{proof}

\section{Filter design}\label{sec:IV}
In this section, we present the design of filter gains $K_A$, $K_B$ and $K_C$ in (\ref{filter}).\\
We note that using Schur complement, it can be shown that (\ref{lemma3}) is equivalent to
\begin{equation}\label{equ:7}
\begin{bmatrix}
\Psi_1 &P\bar{B} &\bar{C}^{\text{T}} &\Gamma_1\\
* &-\gamma^2I+\tau_M Q_2 &0 &0\\
* &* &-I &0\\
* &* &* &-1/\tau_M\mathcal{Q}
\end{bmatrix}<0
\end{equation}
where $$\Psi_1=P[\bar{A}+\bar{A}_dI_0]+[\bar{A}+\bar{A}_dI_0]^{\text{T}}P,\quad\Gamma_1=[P\bar{A}_dA_a\; P\bar{A}_dB_a\;I_0^\text{T}],$$
$$\mathcal{Q}=diag\{Q_1,Q_2,Q_1^{-1}\}.$$
Assume that $P$ and $P^{-1}$ can be decomposed as follows,
\begin{equation}\label{decompose}
P=\begin{bmatrix}Y &N\\
N^\text{T} &W_1\end{bmatrix},\quad P^{-1}=\begin{bmatrix}X &M\\M^\text{T} &W_2\end{bmatrix}.\end{equation}
Then $PP^{-1}=I$ implies
\begin{equation}\label{equ:8}
\begin{cases}
&YX+NM^\text{T}=I,\\
&YM+NW_2=0,\\
&N^\text{T}X+W_1M^\text{T}=0.
\end{cases}
\end{equation}
Define
\begin{equation}\label{phi}
\Phi_1=\begin{bmatrix}X &I\\
M^\text{T} &0\end{bmatrix},\quad \Phi_2=\begin{bmatrix}I &Y\\0 &N^\text{T}\end{bmatrix}.
\end{equation}
It can be shown that
\begin{equation}\label{P}
P\Phi_1=\Phi_2
\end{equation}
and
\begin{equation}\label{haha}
\Phi_1^\text{T}P\Phi_1=\begin{bmatrix}X &I\\I &Y\end{bmatrix}>0.
\end{equation}
Define the following matrices $\mathcal{A}$, $\mathcal{B}$, $\mathcal{C}$:
\begin{equation}\label{abc}
\begin{cases}
&\mathcal{A}=YA_aX+NK_B(C_{a0}+C_{a1})X+NK_AM^\text{T},\\
&\mathcal{B}=NK_B,\\
&\mathcal{C}=K_CM^\text{T}.
\end{cases}
\end{equation}
With direct matrix calculation, pre- and post- multiplying (\ref{equ:7}) by $diag\{\Phi_1^\text{T},I,I,I\}$ and $diag\{\Phi_1,I,I,I\}$, respectively, yields
\begin{equation}\label{equ:10}
\begin{bmatrix}
T_1 &T_2 &T_3 &T_4\\
* &-\gamma^2+\tau_M Q_2 &0 &0\\
* &* &-I &0\\
* &* &* &-1/\tau_M\mathcal{Q}
\end{bmatrix}<0,
\end{equation}
where $$T_1=\begin{bmatrix}A_aX+XA_a^\text{T} &A_a+\mathcal{A}^\text{T}\\
A_a^\text{T}+\mathcal{A} &\begin{pmatrix}&YA_a+A_a^\text{T}Y+\mathcal{B}(C_{a0}+C_{a1})\\&+(C_{a0}+C_{a1})^\text{T}\mathcal{B}^\text{T}\end{pmatrix}\end{bmatrix},$$
$$T_2=\begin{bmatrix}B_a\\YB_a+\mathcal{B}D_a\end{bmatrix},\quad T_3=\begin{bmatrix}XE_a^\text{T}-\mathcal{C}^\text{T}\\E_a^\text{T}\end{bmatrix},$$
$$T_4=\begin{bmatrix}0 &0 &X \\
\mathcal{B}C_{a1}A_a &\mathcal{B}C_{a1}B_a &I \end{bmatrix}.$$
Thus, if there exist symmetric matrices $X>0$, $Y>0$ and matrices $\mathcal{A}$, $\mathcal{B},$ $\mathcal{C}$ that satisfy (\ref{equ:10}), then the $H_{\infty}$ criterion is satisfied and filter gains can be obtained by solving (\ref{abc}). We next give the following Lemma which can be used to design the filter
\begin{theorem}
If there exist symmetric matrices $X>0$, $Y>0$ and matrices $\mathcal{A}$, $\mathcal{B},$ $\mathcal{C}$ that satisfy (\ref{equ:10}) and (\ref{haha}) with a given positive constant $\gamma$, then there exist matrices $K_A$, $K_B$, $K_C$ such that the error system (\ref{extended}) is stable and (\ref{gamma}) is satisfied. The filter gains are
$$K_A=N^{-1}[\mathcal{A}-YA_aX-\mathcal{B}(C_{a0}+C_{a1})X](M^\text{T})^{-1},$$
$$K_B=N^{-1}\mathcal{B},\quad K_C=\mathcal{C}(M^\text{T})^{-1}.$$
\end{theorem}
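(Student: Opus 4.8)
The plan is to reverse the congruence transformation and change of variables that produced (\ref{equ:10}) from (\ref{equ:7}), thereby exhibiting a feasible point of the analysis LMI (\ref{lemma3}) and invoking Theorem 1. The whole argument hinges on reconstructing a positive-definite $P$ (together with its partition (\ref{decompose})) and the gains $K_A,K_B,K_C$ out of the free variables $X,Y,\mathcal{A},\mathcal{B},\mathcal{C}$, after which everything downstream is the routine, reversible algebra that mirrors the forward derivation already carried out before the statement.

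First I would use the hypothesis (\ref{haha}) to build the missing blocks of $P$. Applying the Schur complement (Lemma 1) to $\begin{bmatrix}X&I\\I&Y\end{bmatrix}>0$ gives $X>0$ and $Y-X^{-1}>0$; since $YX=X^{-1/2}(X^{1/2}YX^{1/2})X^{1/2}$ is similar to $X^{1/2}YX^{1/2}$, and $X^{1/2}YX^{1/2}-I>0$, all eigenvalues of $YX$ exceed $1$, so $I-YX$ is nonsingular. I can therefore pick any nonsingular square $N,M$ with $NM^{\text{T}}=I-YX$ (for instance $N=I-YX$, $M=I$), and then set $W_2=-N^{-1}YM$ and $W_1=-N^{\text{T}}X(M^{\text{T}})^{-1}$ so that the three identities (\ref{equ:8}) hold by construction; this makes the $P$ assembled via (\ref{decompose}) satisfy $PP^{-1}=I$. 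With $\Phi_1,\Phi_2$ as in (\ref{phi}), nonsingularity of $M$ makes $\Phi_1$ nonsingular, and the identities (\ref{P})–(\ref{haha}) then give $\Phi_1^{\text{T}}P\Phi_1=\begin{bmatrix}X&I\\I&Y\end{bmatrix}>0$; since $\Phi_1$ is nonsingular, this congruence forces $P>0$.

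Next I would define the gains by inverting (\ref{abc}): $K_B=N^{-1}\mathcal{B}$, $K_C=\mathcal{C}(M^{\text{T}})^{-1}$, and $K_A=N^{-1}[\mathcal{A}-YA_aX-\mathcal{B}(C_{a0}+C_{a1})X](M^{\text{T}})^{-1}$, which are exactly the formulas asserted. With these choices the relations used to pass from (\ref{equ:7}) to (\ref{equ:10}) hold verbatim, so (\ref{equ:10}) is precisely $\text{diag}\{\Phi_1^{\text{T}},I,I,I\}$ times (\ref{equ:7}) times $\text{diag}\{\Phi_1,I,I,I\}$. As $\Phi_1$ is nonsingular, this congruence preserves definiteness, hence (\ref{equ:10})$<0$ is equivalent to (\ref{equ:7})$<0$ for the reconstructed $P$ and gains. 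Reading off the $(4,4)$ block of (\ref{equ:10}), the negative-definiteness $-\tfrac{1}{\tau_M}\mathcal{Q}<0$ forces $\mathcal{Q}=\text{diag}\{Q_1,Q_2,Q_1^{-1}\}>0$, i.e. $Q_1>0$ and $Q_2>0$.

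Finally, applying the Schur complement once more, as already noted in the text, turns (\ref{equ:7}) back into (\ref{lemma3}), the extra column $I_0^{\text{T}}$ and block $Q_1^{-1}$ reproducing the term $\tau_M I_0^{\text{T}}Q_1I_0$ inside $\Psi$. Thus the reconstructed $P>0$, $Q_1>0$, $Q_2>0$ and the gains $K_A,K_B,K_C$ satisfy (\ref{lemma3}), and Theorem 1 yields asymptotic stability of (\ref{extended}) together with (\ref{gamma}). The main obstacle is the second paragraph: showing that a solution $X,Y$ of (\ref{haha}) always admits a consistent completion to an invertible $P>0$ carrying the partition (\ref{decompose}) with nonsingular $N,M$, since this is what makes the change of variables genuinely invertible; the remaining manipulations are mechanical reversals of the pre-theorem computation.
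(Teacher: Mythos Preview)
Your proposal is correct and follows essentially the same approach as the paper: reconstruct $P>0$ from $X,Y$ via the factorization $NM^{\text{T}}=I-YX$, recover the gains by inverting (\ref{abc}), and undo the congruence and Schur complement to land back at (\ref{lemma3}) so that Theorem~1 applies. In fact you supply considerably more detail than the paper's own proof, which merely asserts that (\ref{haha}) guarantees a suitable $P$ and that the resulting gains satisfy (\ref{lemma3}); your argument that $I-YX$ is nonsingular and that the congruence $\Phi_1^{\text{T}}P\Phi_1>0$ forces $P>0$ fills in precisely the step the paper leaves implicit.
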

\begin{proof}
(\ref{haha}) guarantees that there is a positive definite matrix $P$ that can be decomposed as in (\ref{decompose}). Then it is easy to check that defined $K_A$, $K_B$, $K_C$ and $P$ satisfies (\ref{lemma3}).\\
\emph{Remark:} To implement the algorithm, first solve LMIs (\ref{haha}) and (\ref{equ:10}) to get $X$, $Y$, $\mathcal{A}$, $\mathcal{B}$ and $\mathcal{C}$. Then using $MN^\text{T}=I-XY$ as in (\ref{equ:8}), $M$ and $N$ can be obtained by singular value decomposition. Then define $\Phi_1$ and $\Phi_2$ as in (\ref{phi}), $P$ can be solved using (\ref{P}). Then the filter gains can be computed as in Theorem 2.
\end{proof}

\section{Simulations}\label{sec:V}
In this section, we present numerical simulations to illustrate the effectiveness of our designed filter. The simulation parameters are illustrated in Table~\ref{tab1}.
\begin{table}[h]
\caption{Simulation parameters}
\label{tab1}
\centering
\begin{tabular}{|c |c |c |c|c|c| }
\hline
\hline
$m_s$ &$m_{us}$ &$k_s$ &$k_{us}$ &$c_s$ &$\alpha$ \\
\hline
 290 $kg$ &60 $kg$ &16800 $N/m$ &19000 $N/m$ &200 $N\cdot s/m$ &0.1\\
\hline
\end{tabular}
\end{table}
For simulations, a road segment over a 10 sec horizon is modeled as follows,

$$
\dot{r}_o(t)=
\begin{cases}
0.15\cdot\sin{\pi (t-1)} &1s \leq t \leq 3s, \\
0.2\cdot\sin{\pi/2 t} &4s \leq t \leq 8s, \\
0 & \rm{otherwise}.
\end{cases}
$$
See Figure \ref{fig:e}.
\begin{figure}[h]
\centering
\includegraphics[scale=0.3]{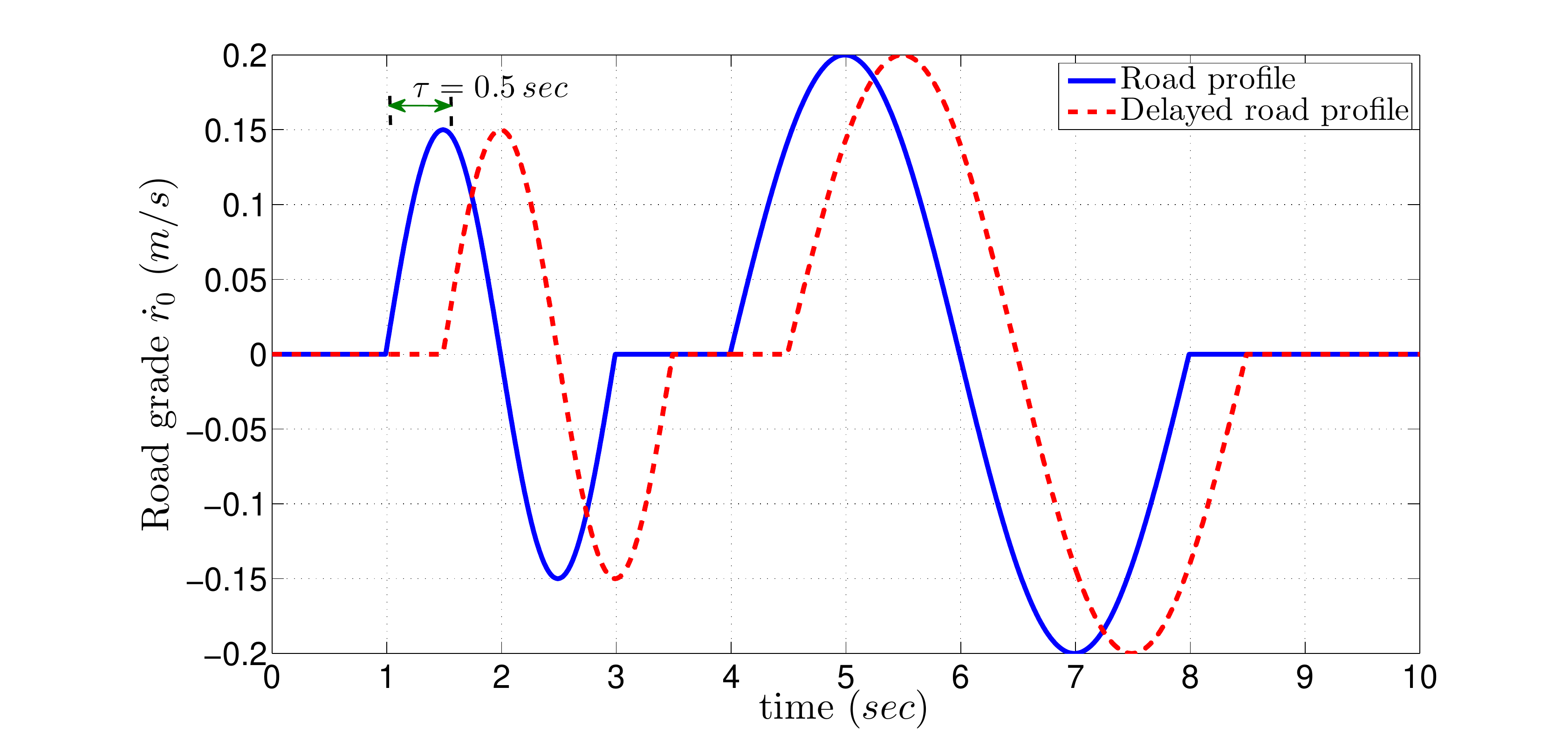}
\caption{Road grade profile ($\dot{r}_0$).}
\label{fig:e}
\end{figure}
Let $\gamma=0.5$ and $\tau_M=0.5$ $sec$, we aim at a filter in form of (\ref{filter}) such that (\ref{gamma}) is satisfied. Applying Theorem 2 with Matlab LMI toolbox, filter gains $K_A$, $K_B$ and $K_C$ in (\ref{filter}) are obtained.
%
%

 With the obtained filter, estimates of $x_3$ and $x_4$ are shown in Fig. \ref{fig:a3} and Fig. \ref{fig:a4}, respectively. Plots of estimates of $x_1$ and $x_2$ are not shown due to space limit. The $H_{\infty}$ filter performance is compared with a Kalman filter assuming no knowledge of the road profile information. It can be seen that by using road profile from the cloud with small delays (e.g., 0.2 sec), our designed $H_{\infty}$ filtering has better performances than the traditional Kalman filter. However, with larger time delays, the estimation performance can be worse than the Kalman filter. Note that from practical standpoint the delays in V2C2V system can be reduced if sufficient communication bandwidth and on-board memory storage is available so that the road profile information can be transmitted in advance.
\begin{figure}[h]
\centering
\includegraphics[width=5in]{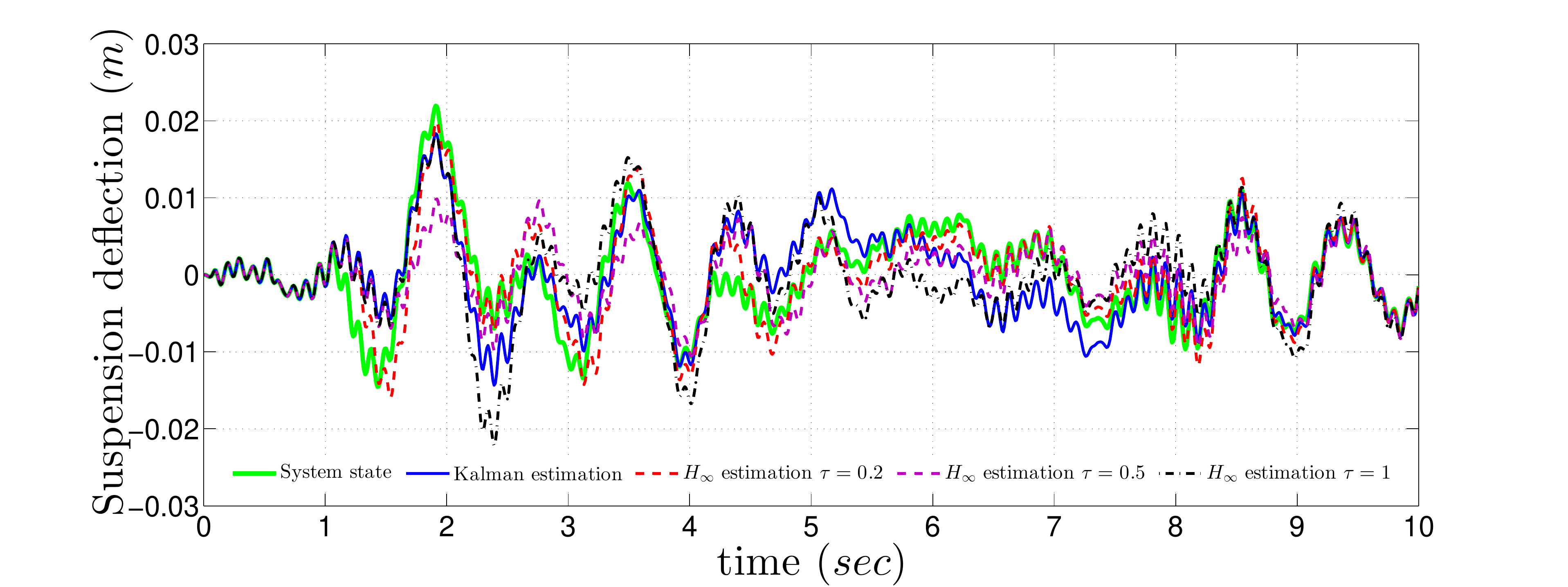}
\caption{Suspension deflection $x_3$}
\label{fig:a3}
\end{figure}

\begin{figure}[h]
\centering
\includegraphics[width=5in]{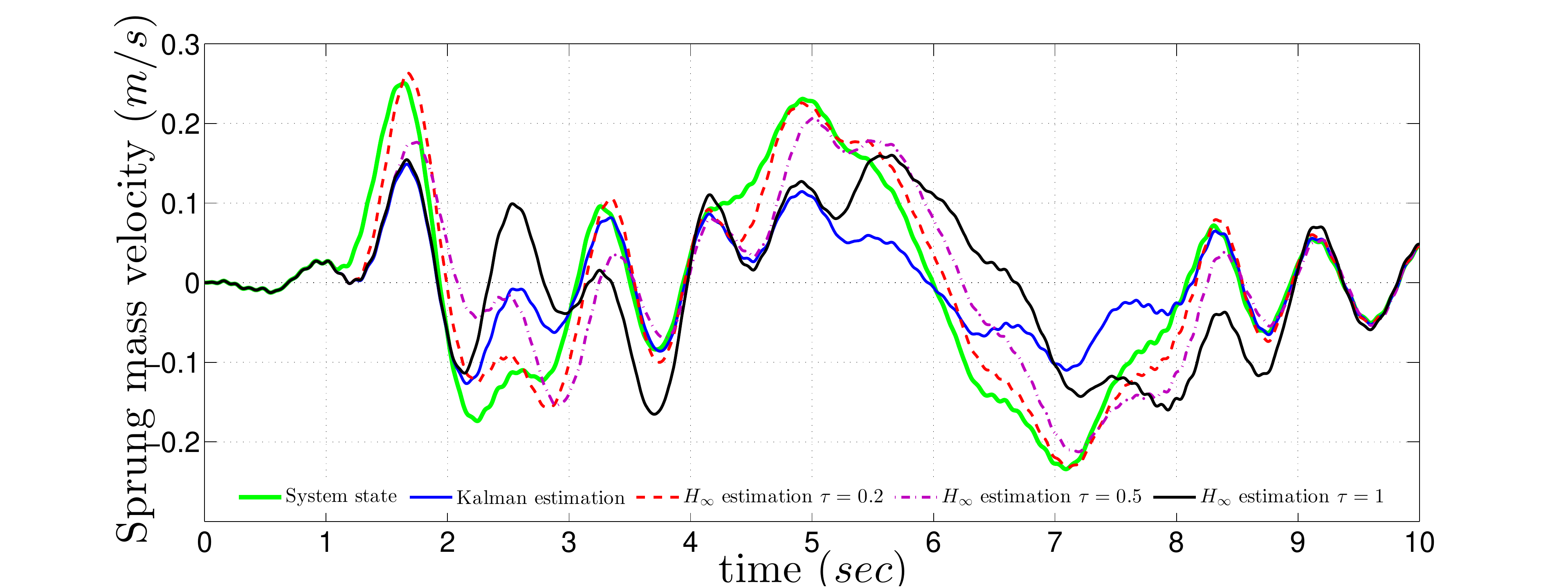}
\caption{Sprung mass velocity $x_4$}
\label{fig:a4}
\end{figure}
\section{Conclusions}\label{sec:VI}
In this chapter we studied an $H_{\infty}$ filtering problem for cloud-aided semi-active suspension where road profile information was sent from the cloud to the vehicle to compliment on-board measurements. We have studied this problem under the assumption that there are delays in transmitting the information from the cloud to the vehicle. Sufficient conditions of existence of the $H_{\infty}$ filter were given in terms of linear matrix inequalities. The explicit expressions of the filter parameters were derived.  A numerical example illustrated that the proposed filter framework has a potential for performance than traditional Kalman filter when the communication delay is not very large.

\bibliographystyle{ieeetr}
\bibliography{semiactive}
\end{document}